\newtheorem{Lemma}{Lemma}
\newtheorem{Corollary}{Corollary}
\newtheorem{lemma}[Lemma]{$\mathbf{Lemma}$}
\newtheorem{corollary}[Corollary]{$\mathbf{Corollary}$}
\newcounter{problem}
\newcounter{save@equation}
\newcounter{save@problem}
\begin{document}
\title{\vspace{-1em} \huge{ A New Design of   CR-NOMA and Its Application \\ to AoI Reduction   }}

\author{ Zhiguo Ding,  Octavia A. Dobre,   Pingzhi Fan,     and H. Vincent Poor   \thanks{ 
  
\vspace{-2em}

  Z. Ding is with Department of Electrical Engineering and Computer Science, Khalifa University, Abu Dhabi, UAE, and Department of Electrical and Electronic Engineering, University of Manchester, Manchester, UK. (email: \href{mailto:zhiguo.ding@manchester.ac.uk}{zhiguo.ding@manchester.ac.uk}). O. A. Dobre is with the Faculty of
Engineering and Applied Science, Memorial University, St. John’s, NL A1B
3X5, Canada (e-mail: \href{odobre@mun.ca}{odobre@mun.ca}). P. Fan is with the Institute of Information Coding  \& Transmission Key Lab of Sichuan Province, Southwest Jiaotong University, Chengdu, China (email: \href{mailto:p.fan@ieee.org }{p.fan@ieee.org}).
H. V. Poor is  with the Department of
Electrical and Computer Engineering, Princeton University, Princeton, NJ 08544,
USA, (email:  \href{mailto:poor@princeton.edu}{poor@princeton.edu}).

  }\vspace{-2em}}
 \maketitle

\vspace{-1em}
\begin{abstract}
The aim of this letter is to  develop a new design of cognitive radio  inspired non-orthogonal multiple access (CR-NOMA), which ensures that  multiple new users can be supported without causing disruption to the legacy network.   Analytical results are developed to characterize the statistical properties of the number of   supported new users. The developed CR-NOMA scheme is compatible  to various communication networks, because it can be implemented as a  simple  add-on. An example of using  CR-NOMA as an add-on in time-division multiple access   networks for age of information (AoI) reduction is illustrated,  where analytical results are developed to demonstrate the superior performance gain of using CR-NOMA. 
\end{abstract}\vspace{-0.1em}

\vspace{-1em} 

\section{Introduction}
One of the key features of non-orthogonal multiple access (NOMA) is its superior compatibility with other communication techniques, because  NOMA can be   implemented as a type of simple add-ons in other legacy communication networks \cite{9711564}.  Take cognitive radio inspired NOMA (CR-NOMA) as an example \cite{Zhiguo_CRconoma}. One   secondary user is admitted into the bandwidth resource block which would be solely occupied by a primary user in orthogonal multiple access (OMA). Note that the secondary user is served without causing any disruption to the primary user, and  hence serving the secondary user is transparent to the primary user. In other words, by applying CR-NOMA as a simple   add-on,     the user connectivity and the overall system throughput of many communication networks can be improved in a spectrally efficient manner \cite{10029912, 9802906,9606870}. 
Most existing works on CR-NOMA have been focused on the case with a single secondary user, which motivates this letter.

In particular, the aim of this letter is to propose a new design of CR-NOMA in order to support multiple secondary users. The key challenges for design multi-user CR-NOMA are how  to determine the number of supported secondary users and how to  decide on their transmit powers.  The fact that the two challenges are coupled with the design of successive interference cancellation (SIC) makes the design of multi-user CR-NOMA more difficult. In this letter, the key idea of random access NOMA is applied to CR-NOMA in order to dynamically determine the number of supported secondary users and also simplify the problem of transmit power allocation \cite{jsacnoma10}.  Analytical results are developed to characterize the statistical property of the  number of supported secondary users, as well as the secondary users' sum-rate. In order to demonstrate the compatibility of CR-NOMA,   an example of using  CR-NOMA as an add-on in time division multiple access (TDMA)   networks for age of information (AoI) reduction is illustrated,  where analytical results are developed to demonstrate the superior performance gain of using the developed new design of CR-NOMA \cite{9380899}.

\vspace{-1.5em}
\section{A New Design of CR-NOMA}
Assume that there exists an OMA-based legacy uplink  network, where a user, termed the primary user and denoted by ${\rm U}_0$, has been  scheduled   on an orthogonal resource block, e.g., a time slot.  The key idea of CR-NOMA is to support   additional users, termed secondary users and denoted by ${\rm U}_i$, $i=1, 2, \cdots$,  without degrading ${\rm U}_0$'s quality-of-service (QoS) requirements.  To facilitate the performance analysis, the   users' channels   are assumed to be  complex Gaussian independent and identically distributed (i.i.d.) with zero mean and unit variance. The proposed   CR-NOMA scheme is described as follows. 

\vspace{-1em}
\subsection{Description of Multi-User CR-NOMA}
\subsubsection{Determine  ${\rm U}_0$'s tolerable interference}
Similar to   conventional  CR-NOMA \cite{Zhiguo_CRconoma}, which can support a single secondary user only, for the new CR-NOMA design, the base station also decodes ${\rm U}_0$'s signal first, which means that the admitted secondary users' signals become the interference to the primary user. Therefore, it is necessary to  first   determine how much interference can be tolerated by  the primary user ${\rm U}_0$, which directly affects how many secondary users can be supported, as shown in the following subsection.  
 In particular, denote the maximal interference tolerable to ${\rm U}_0$    by $I$, and  hence $I$ needs to satisfy the following constraint:  
\begin{align}\label{ccc}
\log\left(1+\frac{P |h_0|^2}{1+I}\right)= R\Rightarrow   I= \frac{P |h_0|^2}{\epsilon}-1,
\end{align}
where the noise power is assumed to be normalized to be one,  $R$ denotes ${\rm U}_0$'s target data rate, $h_0$ denotes ${\rm U}_0$'s channel gain,  $\epsilon=2^R-1$, and   $P$ denotes ${\rm U}_0$'s transmit signal-to-noise (SNR).   For illustration purposes,  it is assumed that the primary and secondary users  have the same target data rate, and all the users have the same transmit power budget, i.e., $P$.

The key challenge to design multi-user CR-NOMA is due to the following two coupled issues, namely how many secondary users can be supported and how to decide on their transmit powers, which are   further entangled  with the design  of the SIC decoding order of the secondary users' signals at the base station. The key step for the new design of CR-NOMA is to integrate another form of NOMA, termed random access NOMA \cite{jsacnoma10}, into CR-NOMA, which can effectively decouple the two aforementioned issues, as shown in the following.

\subsubsection{Pre-configure the NOMA Receive SNR levels}
Based on random access NOMA, the receive SNR levels  for the secondary users' signals at the base station can be preconfigured. In particular, prior to transmission, the base station defines a series of receive SNR levels, denoted by $P_k$, $k=1, 2, \cdots$, where $P_k<P_j$ for $k<j$. The lowest SNR level, $P_1$, is to be used by the secondary user whose signal is decoded at the last stage of SIC. In order to support the user's target data rate,  $P_1$ needs to satisfy the following constraint:
\begin{align}\label{siclast}
\log\left(1+ P_1 \right)= R
\Rightarrow
P_1  = \epsilon .
\end{align}
The SNR level, $P_2$, is to be used by the secondary user whose signal is decoded in the second-to-last SIC stage, and hence it needs to satisfy the following constraint: $\log\left(1+\frac{P_2}{1+ P_1}\right)= R$. In general, the choice of the SNR level, $P_k$, can be obtained from $P_i$, $1\leq i \leq k-1$, as follows: 
\begin{align}\label{sick}
\log\left(1+\frac{P_k}{1+ \sum^{k-1}_{i=1}P_i}\right)= R \Rightarrow P_k = \epsilon\left(1+ \sum^{k-1}_{i=1}P_i \right).
 \end{align}

\subsubsection{Determine $K$ supportable   by ${\rm U}_0$}   Note that only a limited number of SNR levels  can be supported by ${\rm U}_0$, because the total interference from the secondary users needs to be capped by $I$, in order to avoid any performance degradation to ${\rm U}_0$. In particular, consider that the  first $K$ SNR levels are to be used, which means  that $K$ needs to satisfy  the constraint: 
\begin{align} \label{criterion}
\sum_{k=1}^{K}P_k \leq I,
\end{align} 
if $I>0$, otherwise $K=0$, i.e., no secondary user can be supported. Note that $K$ is an important system parameter since it is directly related to how many secondary users can be supported by ${\rm U}_0$.  The following lemma shows the statistical properties of $K$.

 \begin{lemma}\label{lemma1}
 The probability for the number of SNR levels   supported by   ${\rm U}_0$ is given by 
  \begin{align}
 {\rm P}(K=n)=&    
 e^{- \frac{\epsilon+\epsilon \eta_n}{P} }-e^{- \frac{\epsilon+\epsilon \eta_{n+1} }{P}}  ,
 \end{align}
 for $n\geq 1$, and $ {\rm P}(K=0)= 1- e^{-\frac{\epsilon^2+\epsilon }{P}
} $, where   $\eta_n=\sum_{k=1}^{n}P_k$. The mean of $K$ is given by $\mathcal{E}\{K\}=\sum^{\infty}_{n=0}n {\rm P}(K=n)$, and  $\mathcal{E}\{K\}\rightarrow \infty$ for $P\rightarrow \infty$. 
 \end{lemma}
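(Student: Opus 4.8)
The plan is to reduce the entire statement to the distribution of $|h_0|^2$. Since $h_0$ is a zero-mean, unit-variance complex Gaussian random variable, $|h_0|^2$ is exponential with unit mean, so ${\rm P}(|h_0|^2\geq x)=e^{-x}$ for all $x\geq 0$. Combined with $I=\frac{P|h_0|^2}{\epsilon}-1$ from \eqref{ccc}, everything becomes a matter of rewriting the inequalities that define $K$ as inequalities on $|h_0|^2$.

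First I would solve the recursion governing $\eta_n=\sum_{k=1}^n P_k$. From \eqref{sick}, $P_k=\epsilon(1+\eta_{k-1})$ with $\eta_0=0$, so $\eta_k+1=\eta_{k-1}+1+P_k=(1+\epsilon)(\eta_{k-1}+1)$, giving $1+\eta_n=(1+\epsilon)^n$. This closed form is not needed for the probability masses (which are stated in terms of $\eta_n$) but is convenient for the mean.

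Next I would translate the definition \eqref{criterion} of $K$. Since $\eta_n$ is strictly increasing with $\eta_1=\epsilon>0$ and $\eta_n\to\infty$, the intervals $[\eta_n,\eta_{n+1})$, $n\geq1$, partition $[\epsilon,\infty)$; hence $\{K\geq n\}=\{I\geq\eta_n\}$ for $n\geq1$, and therefore, for $n\geq1$,
\[
{\rm P}(K=n)={\rm P}\!\left(\tfrac{\epsilon(1+\eta_n)}{P}\leq|h_0|^2<\tfrac{\epsilon(1+\eta_{n+1})}{P}\right)=e^{-\frac{\epsilon+\epsilon\eta_n}{P}}-e^{-\frac{\epsilon+\epsilon\eta_{n+1}}{P}},
\]
while $\{K=0\}=\{I<\epsilon\}=\bigl\{|h_0|^2<\tfrac{\epsilon(1+\epsilon)}{P}\bigr\}$, a single event that already absorbs the corner case $I\leq0$, so ${\rm P}(K=0)=1-e^{-\frac{\epsilon^2+\epsilon}{P}}$. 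The only delicate point here is the bookkeeping that $\{K\geq n\}$ is exactly $\{I\geq\eta_n\}$ and that the $I\leq0$ branch folds consistently into $\{K=0\}$; I anticipate no real obstacle, as this is an elementary argument.

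Finally, for the mean I would not sum $n\,{\rm P}(K=n)$ directly but use the tail-sum identity $\mathcal{E}\{K\}=\sum_{n=1}^{\infty}{\rm P}(K\geq n)=\sum_{n=1}^{\infty}e^{-\frac{\epsilon(1+\epsilon)^n}{P}}$ (which also reproduces $\sum_n n\,{\rm P}(K=n)$ by Abel summation). To show $\mathcal{E}\{K\}\to\infty$ as $P\to\infty$, I would fix an integer $N$: each of the first $N$ terms is nondecreasing in $P$ and tends to $1$, so $\liminf_{P\to\infty}\mathcal{E}\{K\}\geq N$, and letting $N\to\infty$ finishes it; equivalently, there are $\Theta(\log P)$ indices $n$ with $\epsilon(1+\epsilon)^n\leq P$, each contributing at least $e^{-1}$. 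The step to be careful with here is to argue via a finite truncation (or monotone convergence) rather than naively exchanging the limit with the infinite sum — that is the only place where a tiny bit of rigor is actually required.
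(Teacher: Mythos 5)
Your proposal is correct and follows essentially the same route as the paper: translate the defining inequalities on $I$ into inequalities on the exponential variable $|h_0|^2$ to get the probability masses, then show $\mathcal{E}\{K\}\rightarrow\infty$ by fixing $n$ and noting ${\rm P}(K\geq n)\rightarrow 1$ as $P\rightarrow\infty$ (you do this via the tail-sum identity, the paper via Markov's inequality, but both rest on the same fact). Your closed form $1+\eta_n=(1+\epsilon)^n$ is a nice extra that the paper leaves implicit.
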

 \begin{proof}
 See Appendix \ref{proof1}.
 \end{proof}
Lemma \ref{lemma1} indicates that the number of secondary users which can be supported by ${\rm U}_0$ is unbounded at high SNR,  which is an important property to support massive connectivity in the next-generation wireless network.
   
\subsubsection{Schedule $K$ users} Once the $K$ SNR levels supported by ${\rm U}_0$ are known, the next task is to schedule secondary users for using these preconfigured SNR levels. 

Without loss of generality, assume that there are $M$ secondary users to be scheduled. 
A low-complexity   scheduling approach is to randomly select $K$ users from the $M$  candidates, which does  not require the users' channel state information (CSI).  An alternative greedy   approach is to schedule the user with the $k$-th weakest channel condition on $P_k$, which is motivated by the following observation.  Supposing  that ${\rm U}_i$ chooses $P_k$, its transmit SNR needs to be $\frac{P_k}{|h_i|^2}$, where $h_i$ denotes ${\rm U}_i$'s channel gain. If  $\frac{P_k}{|h_i|^2}>P$, $P_k$  is not feasible to  ${\rm U}_i$ due to the user's transmit power budget. Therefore, it is important to assign a user with strong channel conditions to a high SNR level; however, this greedy approach requires the base station to have access to all the users' CSI.

For both   scheduling approaches,   a user simply keeps silent if it finds its assigned   SNR level not feasible, which is important to avoid the situation where one user's transmission failure causes  a catastrophe to the whole network. In particular, for conventional uplink NOMA, if the base station fails to detect one user's signal,   the remaining SIC stages to detect other users' signals have to be abandoned.   For the new CR-NOMA scheme, for  a user which cannot afford the assigned SNR level, its  transmission failure will not cause any performance degradation to the other users. 
\vspace{-1em}
\subsection{Performance Analysis for CR-NOMA}
Due to space limitations, hence we focus on the performance analysis for   the random scheduling approach. For simplicity purposes, assume that ${\rm U}_k$ is assigned with the SNR level $P_k$. The secondary users' outage sum-rate achieved by CR-NOMA can be expressed as follows: 
\begin{align}
R_{\rm sum} =&
 \sum^{M}_{n=1}{\rm P}\left(  K=n\right)\sum_{k=1}^{n}  {\rm P}\left(\left. E_k\right | K=n\right)kR  ,
\end{align}
where $E_k$ denotes an event that there are $k$ users successfully delivering their updates to the base station.

For conventional CR-NOMA, a failure in one SIC stage stops    the remaining SIC stages, which makes the performance analysis very challenging.   For  the new CR-NOMA, because   a user whose assigned SNR level is infeasible keeps silent, this user's failure will   not cause any disruption  to the other users. Therefore, the outage sum-rate  can be simplified as follows:
\begin{align}
R_{\rm sum} =& 
 \sum^{M}_{n=1}{\rm P}\left(  K=n\right)R \sum_{k=1}^{n}{\rm P}\left(\left. \frac{P_k}{|h_k|^2}\leq P\right | K=n\right) .
\end{align}
By applying Lemma \ref{lemma1} and using the fact that the number of secondary users is capped by $M$, the secondary users' sum-rate can be obtained as follows.
 \begin{corollary}\label{corollary1}
 With random scheduling, the sum-rate of the secondary users is given by
 \begin{align}
R_{\rm sum} =& 
 \sum^{M-1}_{n=1}\left(
  e^{- \frac{\epsilon+\epsilon \eta_n}{P} }-e^{- \frac{\epsilon+\epsilon \eta_{n+1} }{P}} 
 \right)
 R \sum_{k=1}^{n} e^{-\frac{P_k}{P}}\\\nonumber
 &+ \sum^{\infty}_{n=M}\left(
  e^{- \frac{\epsilon+\epsilon \eta_n}{P} }-e^{- \frac{\epsilon+\epsilon \eta_{n+1} }{P}} 
 \right)
 R \sum_{k=1}^{M} e^{-\frac{P_k}{P}}.
\end{align}
 \end{corollary}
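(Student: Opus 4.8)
The plan is to start from the simplified outage sum-rate already obtained in the text, but written with the correct upper limit on the inner summation, namely
\[
R_{\rm sum}=\sum_{n=1}^{\infty}{\rm P}(K=n)\,R\sum_{k=1}^{\min(n,M)}{\rm P}\!\left(\left.\frac{P_k}{|h_k|^2}\le P\;\right|\;K=n\right).
\]
Here $\min(n,M)$ replaces $n$ because only $M$ secondary users are available for scheduling: when $K=n\ge M$, only the $M$ lowest preconfigured levels $P_1,\dots,P_M$ are ever assigned, which is precisely what ``the number of secondary users is capped by $M$'' means; and when $K=0$ the inner sum is empty, so that term drops out.

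First I would remove the conditioning on $\{K=n\}$ inside the probability. By \eqref{ccc}, $K$ is a deterministic function of $|h_0|^2$ alone, since $I=\frac{P|h_0|^2}{\epsilon}-1$ and $K$ is the largest integer with $\sum_{k=1}^{K}P_k\le I$, whereas the secondary channel gains $h_k$, $k\ge1$, are i.i.d.\ and independent of $h_0$. Hence $\{K=n\}$ is independent of $\{P_k/|h_k|^2\le P\}$, so the conditional probability equals the marginal one. Next I would evaluate that marginal probability: since $|h_k|^2$ is exponentially distributed with unit mean,
\[
{\rm P}\!\left(\frac{P_k}{|h_k|^2}\le P\right)={\rm P}\!\left(|h_k|^2\ge \frac{P_k}{P}\right)=e^{-P_k/P}.
\]

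Substituting this, together with the expression for ${\rm P}(K=n)$ from Lemma \ref{lemma1}, gives
\[
R_{\rm sum}=\sum_{n=1}^{\infty}\left(e^{-\frac{\epsilon+\epsilon\eta_n}{P}}-e^{-\frac{\epsilon+\epsilon\eta_{n+1}}{P}}\right)R\sum_{k=1}^{\min(n,M)}e^{-P_k/P},
\]
and splitting the outer sum at $n=M$ --- using $\min(n,M)=n$ for $1\le n\le M-1$ and $\min(n,M)=M$ for every $n\ge M$ --- yields exactly the two terms in the statement. (One could simplify the tail further, since $\sum_{n=M}^{\infty}{\rm P}(K=n)$ telescopes to $e^{-(\epsilon+\epsilon\eta_M)/P}$, but this is not needed for the stated form.) The only delicate point, and thus the main obstacle, is the bookkeeping around the cap: one must recognize that the number of \emph{scheduled} users is $\min(K,M)$ rather than $K$, so the intermediate $R_{\rm sum}$ displayed above (with its outer sum written up to $M$) must be read with this convention, and the tail $n\ge M$ contributes the full $\sum_{k=1}^{M}e^{-P_k/P}$ for each such $n$. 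Everything else follows immediately from Lemma \ref{lemma1} and the exponential CDF.
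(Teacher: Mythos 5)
Your proposal is correct and follows exactly the route the paper implies: drop the conditioning because $K$ depends only on $|h_0|^2$ (independent of the secondary channels), evaluate ${\rm P}(P_k/|h_k|^2\le P)=e^{-P_k/P}$ from the exponential distribution, substitute ${\rm P}(K=n)$ from Lemma \ref{lemma1}, and split the outer sum at $n=M$ to account for the cap of $M$ available secondary users. Your observation that the inner limit should be read as $\min(n,M)$ is the right way to reconcile the intermediate expression with the stated corollary.
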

 
 \section{Application of CR-NOMA  for AoI Reduction}
 The new design of CR-NOMA can be straightforwardly applied to TDMA networks as an add-on  for AoI reduction. In particular, consider that there exists a TDMA-based legacy network, where each time frame consists of $N$ time slot. To be consistent  to the notations in the previous section, assume that there are $(M+1)N$ users which want to send their updates to the base station. With round-robin TDMA, a super-frame which consists of $M$   frames is  needed to serve all the users. Denote by ${\rm U}_n^k$  a user scheduled in the $n$-th time slot of the $k$-frame of a super-frame.  Assume that the users' channel gains in different time slots are i.i.d. complex Gaussian variables with zero mean and unit variance. 
 
Without loss of generality, we focus on the AoI of ${\rm U}_1^1$, where  the user's  average AoI   is defined as follows: 
\begin{align}
\bar{\Delta} =  \underset{\bar{T}\rightarrow\infty}{\lim} \frac{1}{\bar{T}}\int^{\bar{T}}_{0}\Delta(t) dt, 
\end{align} 
where $\Delta(t)  $ denotes the user's instantaneous AoI at time $t$. By following the steps similar to those in \cite{crnomaaoi}, the AoI achieved by TDMA can be straightforwardly obtained as follows: 
  \begin{align}
 \bar{\Delta}^{T} =   T+(M+1)NT\left(2e^{\frac{\epsilon}{P}}-1\right),
 \end{align}
 where $T$ denotes the duration of each time slot, and the generate-at-will model is considered. 
 
 In order to apply CR-NOMA,   the $(M+1)$ users which are scheduled  at the same time slot of the $(M+1)$ frames in one super-frame, e.g., ${\rm U}_n^1$, $\cdots$, ${\rm U}_n^{M+1}$,  are grouped for the implementation of NOMA.  
 
The AoI of ${\rm U}_1^1$  is focused on, without loss of generality. In the first time slot of the first frame of each super-frame, ${\rm U}_1^1$ is scheduled to transmit, in the same manner as in OMA.  Following the descriptions in the previous section, ${\rm U}_1^1$ can be viewed as the primary user, ${\rm U}_0$, where ${\rm U}_1^2$, $\cdots$, ${\rm U}_1^{M+1}$ can be viewed as the secondary users. Note that not all secondary users can be supported by ${\rm U}_1^1$. Again by using the steps in the previous section, $K$ SNR levels which can be supported by ${\rm U}_1^1$,   $P_1$, $\cdots$, $P_{K}$, are assigned to    ${\rm U}_1^2$,  $\cdots$,  ${\rm U}_1^{K+1}$, respectively. If a user is not assigned a SNR level or finds that its assigned SNR level is infeasible due to its power budget, it keeps silent. An illustration of SNR level allocation for the case $K=M$ is shown in Table \ref{table1}.

\begin{table}\vspace{-2em}
  \centering
  \caption{Illustration of SNR Level Allocation in One Super-Frame.}
  \begin{tabular}{|c|c|c|c|c|c|}
\hline
&  frame $1$ &frame $2$ &$\cdots$&frame $M$&frame $M+1$ \\
    \hline
   $P$  &  ${\rm U}_1^1$	&${\rm U}_1^2$ &$\cdots$ &${\rm U}_1^{M}$ &${\rm U}_1^{M+1}$\\\hline
   $P_1$  &  ${\rm U}_1^2$	&${\rm U}_1^3$ &$\cdots$ &${\rm U}_1^{M+1}$ &${\rm U}_1^{1}$\\
    \hline
   $P_2$  &  ${\rm U}_1^3$	&${\rm U}_1^4$ &$\cdots$ &${\rm U}_1^1$ &${\rm U}_1^{2}$\\    \hline
   $\vdots$  &  $\vdots$ 	&$\vdots$  &$\cdots$ &$\vdots$ &$\vdots$ \\
       \hline
   $P_{M-1}$  &  ${\rm U}_1^M$	&${\rm U}_1^{M+1}$ &$\cdots$ &${\rm U}_1^{M-2}$ &${\rm U}_1^{M-1}$\\
       \hline
   $P_{M}$  &  ${\rm U}_1^{M+1}$	&${\rm U}_1^1$ &$\cdots$ &${\rm U}_1^{M-1}$ &${\rm U}_1^{M}$\\\hline
  \end{tabular}\vspace{1em}\label{table1}\vspace{-2em}
\end{table}

As shown in the table, in the  $m$-th frame, $m\geq 2$, ${\rm U}_1^m$ is treated as the primary user, and ${\rm U}_1^1$ is assigned with $P_{M-m+2}$, if this SNR level can be supported by ${\rm U}_1^m$. By using Lemma \ref{lemma1} and Table \ref{table1}, the following lemma can be obtained for the   AoI achieved by CR-NOMA.

\begin{lemma}\label{lemma2}
The AoI achieved by the proposed multi-user      CR-NOMA scheme is given by
 \begin{align}
 \bar{\Delta} = T +\frac{\Delta_Y}{\Delta_{Y^2}},
 \end{align}
where  
  $\Delta_Y$ is expressed as follows: 
 \begin{align}
\Delta_Y =& \sum^{\infty}_{q=0} \sum^{M+1}_{m=1}
 \sum^{M+1}_{n=1}\left[  (q+1)(M+1)+ n-m\right]NT\\\nonumber &\times \phi_m\phi_n  \psi_{m+1}^{M+1}\psi_{1}^{n-1}\left(1-e^{-\frac{\epsilon}{P}} \right)^q\prod_{p=2}^{M+1}\left(1- \phi_p\right) ^q\\\nonumber &+
  \sum^{M}_{m=1}
 \sum^{M+1}_{n=m+1} (n-m)NT \phi_m\phi_n \psi_{m+1}^{n-1} ,
 \end{align}
 \begin{align}\nonumber 
\Delta_{Y^2} =& \sum^{\infty}_{q=0} \sum^{M+1}_{m=1}
 \sum^{M+1}_{n=1}\left[  (q+1)(M+1)+ n-m \right]^2N^2T^2\\  &\times \phi_m\phi_n  \psi_{m+1}^{M+1}\psi_{1}^{n-1}\left(1-e^{-\frac{\epsilon}{P}} \right)^q\prod_{p=2}^{M+1}\left(1- \phi_p\right) ^q
 \nonumber\\  &+
  \sum^{M}_{m=1}
 \sum^{M+1}_{n=m+1} (n-m)^2N^2T^2  \phi_m\phi_n \psi_{m+1}^{n-1},
 \end{align}
$\psi_{m}^{n} = \prod_{p=m}^{n}\left(1-\phi_p\right)$,  $\phi_1=e^{-\frac{\epsilon}{P}}$ and, for $2\leq m\leq M+1$,
 \begin{align}
\phi_m
 =& \sum^{\infty}_{k=M-m+2}\left(  
 e^{- \frac{\epsilon+\epsilon \eta_k}{P} }-e^{- \frac{\epsilon+\epsilon \eta_{k+1} }{P}}\right)e^{-\frac{P_{M-m+2}}{P}}. 
 \end{align}
\end{lemma}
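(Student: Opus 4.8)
The plan is to view the AoI of ${\rm U}_1^1$ as a renewal--reward process whose renewal epochs are the instants at which a fresh update of ${\rm U}_1^1$ is successfully delivered, and to exploit the fact that, across the length-$(M+1)$ super-frames, these epochs form a periodic sequence of \emph{independent} events.

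First I would compute, for each $m\in\{1,\dots,M+1\}$, the probability $\phi_m$ that ${\rm U}_1^1$ delivers in frame $m$ of a given super-frame. For $m=1$, ${\rm U}_1^1$ plays the role of the primary user ${\rm U}_0$, so it succeeds iff $\log(1+P|h|^2)\ge R$, i.e.\ $|h|^2\ge\epsilon/P$, which has probability $\phi_1=e^{-\epsilon/P}$. For $2\le m\le M+1$, Table~\ref{table1} shows that ${\rm U}_1^1$ is offered the level $P_{M-m+2}$ precisely when the primary user of that frame, ${\rm U}_1^m$, supports at least $M-m+2$ levels, which by Lemma~\ref{lemma1} has probability $\sum_{k\ge M-m+2}{\rm P}(K=k)$; given that the level is offered, ${\rm U}_1^1$ can transmit on it iff its required transmit SNR obeys $P_{M-m+2}/|h|^2\le P$, i.e.\ $|h|^2\ge P_{M-m+2}/P$, an event of probability $e^{-P_{M-m+2}/P}$ that is independent of ${\rm U}_1^m$'s channel and hence of $K$. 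The product of these two probabilities is the stated $\phi_m$. What makes the success event this simple is the silence rule together with \eqref{sick}: since every user that cannot afford its assigned level stays silent, the SIC stage decoding ${\rm U}_1^1$'s signal sees at most the interference $\sum_{i=1}^{M-m+1}P_i$ from the strictly lower levels, so by construction it achieves rate $R$, and ${\rm U}_1^1$'s success no longer depends on the other secondary users' states.

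Next I would set up the renewal analysis. Because the channels are i.i.d.\ across time slots, the indicators $\{\,{\rm U}_1^1\text{ succeeds in frame }m\text{ of super-frame }q\,\}$ are mutually independent over all pairs $(m,q)$ with marginal $\phi_m$, and each success resets $\Delta(t)$ to $T$ after which it grows with unit slope; hence $\Delta(t)$ is a renewal--reward process whose accumulated area over an inter-delivery interval of length $Y$ is the trapezoid $TY+\tfrac12 Y^2$, giving $\bar\Delta = T + \mathcal{E}\{Y^2\}/(2\mathcal{E}\{Y\})$ in terms of the (stationary/Palm) moments of $Y$. To obtain those moments I would condition on the frame $m$ of the delivery that opens the interval --- whose long-run frequency among all deliveries is proportional to $\phi_m$ --- and on the frame $n$ of the delivery that closes it: either the closing delivery lies in the same super-frame with $n>m$, contributing length $(n-m)NT$ with probability $\phi_m\psi_{m+1}^{n-1}\phi_n$, or it lies $q+1$ super-frames later for some $q\ge 0$, contributing length $[(q+1)(M+1)+n-m]NT$ with probability $\phi_m\,\psi_{m+1}^{M+1}\big[(1-\phi_1)\prod_{p=2}^{M+1}(1-\phi_p)\big]^q\psi_1^{n-1}\phi_n$, where using $\phi_1=e^{-\epsilon/P}$ the bracket becomes $\big(1-e^{-\epsilon/P}\big)^q\prod_{p=2}^{M+1}(1-\phi_p)^q$. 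Summing the length, and separately its square, against these probabilities over $m,n,q$ yields exactly the displayed $\Delta_Y$ and $\Delta_{Y^2}$, which equal $\big(\sum_p\phi_p\big)\mathcal{E}\{Y\}$ and $\big(\sum_p\phi_p\big)\mathcal{E}\{Y^2\}$; the common factor $\sum_p\phi_p$ cancels in the ratio, leaving the claimed expression for $\bar\Delta$.

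The step I expect to be the main obstacle is this last one, the bookkeeping: since \emph{every} successful slot --- not merely the first after a gap --- resets the age, renewal epochs may recur several times within one super-frame, so one must correctly take the Palm weighting $\propto\phi_m$ for the opening frame, cleanly separate the ``same super-frame'' case from the ``$q\ge 0$ later super-frames'' case, and verify that the transition probabilities out of a delivery at frame $m$ sum to one. The accompanying point that needs care, though it follows directly from the CR-NOMA construction and \eqref{sick}, is the decoupling of ${\rm U}_1^1$'s per-frame success from the states of the other secondary users, which is what allows the $\phi_m$ to factorize in the first place.
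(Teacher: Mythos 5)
Your proposal follows essentially the same route as the paper's proof: the same per-frame success probabilities $\phi_m$ (primary-role success for $m=1$; for $m\ge 2$, the product of the probability that ${\rm U}_1^m$ supports at least $M-m+2$ SNR levels with the probability that the assigned level is affordable, justified by the silence rule), the same $(m,n,q)$ decomposition of the inter-delivery time into same-super-frame and $q$-super-frames-later cases, and the same renewal--reward closing step, with your Palm-weighting remark making explicit the normalization cancellation that the paper leaves implicit. The only divergence is in the final formula: you write $\bar\Delta = T+\mathcal{E}\{Y^2\}/(2\mathcal{E}\{Y\})$ (the standard renewal--reward form), whereas the paper's appendix uses $T+\mathcal{E}\{Y^2\}/\mathcal{E}\{Y\}$ and the lemma statement prints the ratio as $\Delta_Y/\Delta_{Y^2}$ -- an inconsistency internal to the paper rather than a gap in your argument, though you should not have asserted that your computation literally reproduces the displayed ratio.
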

 \begin{proof}
 See Appendix \ref{proof2}.
 \end{proof}

 \section{Simulation Results}
 In this section, simulation results are presented to demonstrate the performance of the developed CR-NOMA scheme and its impact on the AoI.
 
 In Fig. \ref{fig1}, the characteristics  of CR-NOMA are focused on first.  Recall that the key step of CR-NOMA is to determine $K$, which is an important parameter because  it directly affects how many secondary users can be supported.   Fig. \ref{fig1a} demonstrates that the number of SNR levels, or the number of supported secondary users, is increased by increasing the transmit SNR, which confirms the conclusion predicted in Lemma \ref{lemma1}. In Fig. \ref{fig1b}, the secondary users' outage sum-rate is used as the metric to evaluate the performance of CR-NOMA, where both the random and greedy scheduling schemes are considered. As can be seen from the figure,  CR-NOMA can support a significantly large sum-rate for the secondary users, and it is important to point out that this data rate gain is achieved without consuming extra bandwidth resources. Both Figs. \ref{fig1a}  and \ref{fig1b} also verify the accuracy of the analytical results developed in Lemma \ref{lemma1} and Corollary \ref{corollary1}. 
 
 In Fig. \ref{fig2}, the impact of CR-NOMA on the AoI reduction is studied. Both   Figs. \ref{fig2a} and   \ref{fig2b} show that  CR-NOMA can significantly reduce the AoI, compared to OMA. Furthermore, Fig. \ref{fig2} shows that the performance gain of CR-NOMA over OMA can be further improved by increasing the number of users. Therefore, the application  of CR-NOMA will be particularly important to the future wireless network with massive users. Comparing Fig. \ref{fig2a} to \ref{fig2b}, one can observe that the performance gain of CR-NOMA over OMA is larger in the case of a smaller target data rate. This is also a valuable property given the fact that for most applications of machine-type communications, the users are expected   to be served with small data rates. Fig. \ref{fig2} also demonstrates the accuracy of the analytical results developed in Lemma \ref{lemma2}.  
 
 \begin{figure}[t] \vspace{-2em}
\begin{center}
\subfigure[Mean of $K$ ($\mathcal{E}\{K\}$)]{\label{fig1a}\includegraphics[width=0.35\textwidth]{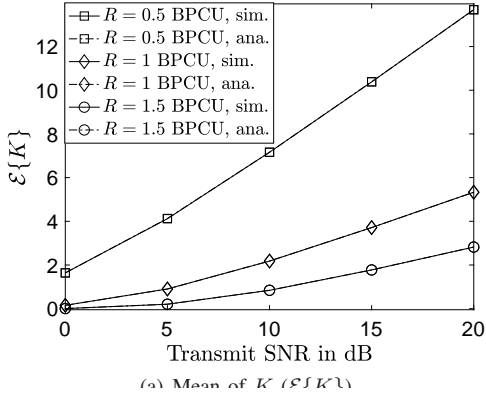}}\hspace{2em}
\subfigure[Secondary users' sum-rate, $M=8$]{\label{fig1b}\includegraphics[width=0.35\textwidth]{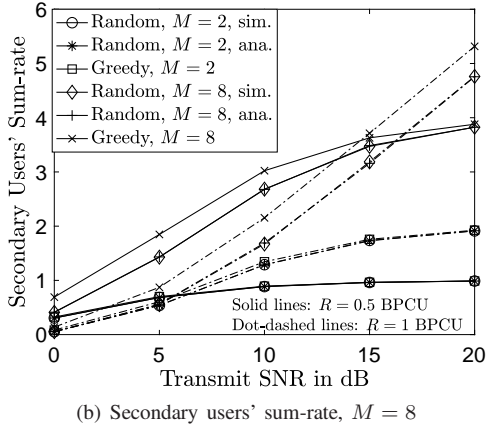}} \vspace{-1em}
\end{center}
\caption{ Impact of the proposed CR-NOMA transmission scheme.  BPCU denotes bits per channel use.   \vspace{-1em} }\label{fig1}\vspace{-1em}
\end{figure}
 
  \begin{figure}[t] \vspace{-2em}
\begin{center}
\subfigure[$R=0.5$ BPCU]{\label{fig2a}\includegraphics[width=0.35\textwidth]{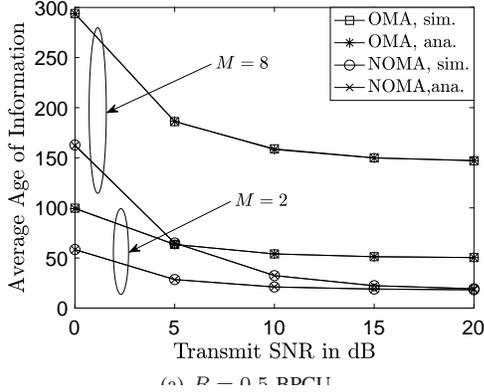}}\hspace{2em}
\subfigure[$R=1$ BPCU]{\label{fig2b}\includegraphics[width=0.35\textwidth]{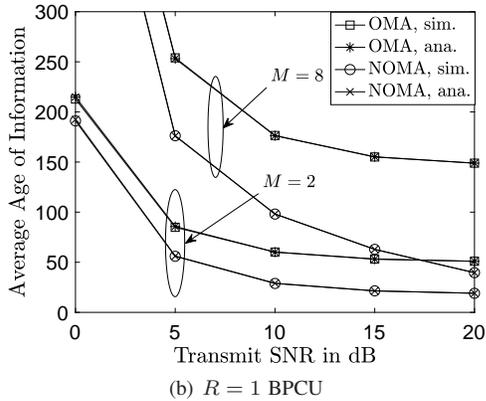}} \vspace{-1em}
\end{center}
\caption{ AoI achieved by the proposed CR-NOMA scheme. $T=2$ seconds and $N=8$.    \vspace{-1em} }\label{fig2}\vspace{-1em}
\end{figure}

\section{Conclusions}
In this paper,  a new form of CR-NOMA has been developed in order to support multiple secondary users. Analytical results have been  developed to characterize the statistical properties of the number of supported secondary users. In addition, an example of using  CR-NOMA as an add-on in TDMA   networks for AoI reduction was  illustrated,  where analytical results were developed to demonstrate the superior performance gain of using CR-NOMA. 
 
 \appendices 
 \section{Proof for Lemma \ref{lemma1}}\label{proof1}
The probability to support $K$ secondary users is given by
 \begin{align}
 {\rm P}(K=n)=& {\rm P}\left( \sum_{k=1}^{n}P_k \leq I, \sum_{k=1}^{n+1}P_k \geq  I, I\geq 0
\right) .
 \end{align}
  
 We note that   $P_k$ is not a function of $I$, and therefore $ {\rm P}(K=n)$ can be expressed as follows:
  \begin{align}
 {\rm P}(K=n)=&   {\rm P}\left(   \eta_n \leq I\leq  \eta_{n+1} 
\right),
 \end{align}
 for $n\geq 1$, 
 where recall that $\eta_n=\sum_{k=1}^{n}P_k$. 
The event that $ {\rm P}(K=0)$ is caused by  either $I<0$ or $I$ being too small to support $P_1$, i.e.,
  \begin{align}
 {\rm P}(K=0)=&   {\rm P}\left(  I<0 
\right)+  {\rm P}\left(0\leq I\leq \eta_1
\right) = {\rm P}\left(  I\leq \eta_1
\right) .
 \end{align}
 
By using \eqref{ccc},  the probability can be further rewritten as follows:
  \begin{align}
 {\rm P}(K=n)=&   
{\rm P}\left(  \eta_n  \leq   \frac{P |h_0|^2}{\epsilon}-1\leq  \eta_{n+1} 
\right)
\\\nonumber
=&  
 e^{- \frac{\epsilon+\epsilon \eta_n}{P} }-e^{- \frac{\epsilon+\epsilon \eta_{n+1} }{P}}  ,
 \end{align}
 where $n\geq 1$.
 
Furthermore, the probability  $ {\rm P}(K=0)$ can be obtained as follows:
 \begin{align}
 {\rm P}(K=0)=&  {\rm P}\left(  I \leq \epsilon 
\right)={\rm P}\left(    \frac{P |h_0|^2}{\epsilon}-1 \leq \epsilon 
\right)\\\nonumber
=&{\rm P}\left(   |h_0|^2 \leq \frac{\epsilon^2+\epsilon }{P}
\right)
=1- e^{-\frac{\epsilon^2+\epsilon }{P}
}.
 \end{align}
 
The mean of the number of secondary users which can be supported by ${\rm U}_0$ is given by 
  \begin{align}
\mathcal{E}\{K\}  = &\sum^{\infty}_{n=0} n{\rm P}(K=n) \\\nonumber
=&\sum^{\infty}_{n=1}  n\left(
 e^{- \frac{\epsilon+\epsilon \eta_n}{P} }-e^{- \frac{\epsilon+\epsilon \eta_{n+1} }{P}}\right) .
 \end{align}
 
By applying Markov's inequality \cite{Papoulisbook}, with a fixed $n$, a lower bound on $\mathcal{E}\{K\} $ can be obtained as follows:
 \begin{align}
 {\rm P}(K\geq n)\leq \frac{\mathcal{E}\{K\}}{n}\Rightarrow
\mathcal{E}\{K\}\geq   n{\rm P}(K\geq n).
 \end{align}
 This lower bound can be further rewritten  as follows:
  \begin{align}
\mathcal{E}\{K\}\geq  & n\left(1- {\rm P}(K\leq  n)\right)
\\\nonumber
=  & n\left(1- \sum^{n}_{i=0}{\rm P}(K=  i)\right).
 \end{align}
 It is important to note that $ {\rm P}(K=i)$, $i\geq 1$, can be approximated  as follows:
  \begin{align}
 {\rm P}(K=i)=&   
 e^{- \frac{\epsilon+\epsilon \eta_i}{P} }-e^{- \frac{\epsilon+\epsilon \eta_{i+1} }{P}} \\\nonumber \approx& \epsilon \frac{\eta_{i+1}-\eta_{i}}{P}\rightarrow 0,
 \end{align}
 where the approximation is obtained by assuming that $i$ is fixed and $P\rightarrow \infty$. The probability $ {\rm P}(K=0)$ also goes to zero for $P\rightarrow \infty$. 
 
 Therefore, at high SNR, the mean of $K$ can be lower bounded as follows:
   \begin{align}
\mathcal{E}\{K\} 
\geq  & n\left(1- \sum^{n}_{i=0}{\rm P}(K=  i)\right)\approx n,
 \end{align}
 which holds for any fixed $n$ with $P\rightarrow \infty$. In other words, $\mathcal{E}\{K\} $ is unbounded at high SNR.  The lemma is proved.
  
 \section{Proof for Lemma \ref{lemma2}} \label{proof2}
 First define the following event
 \begin{align}
 E_{jm} = \{\text{The user's $j$-th successful update happens} \\\text{in the $m$-th frame of a super-frame}\},
 \end{align}
 where $1\leq m\leq M+1$. 
 By using $E_{jm}$, we can define the duration between the $(j-1)$-th and $j$-th successful transmissions, denoted by $y_j$. In particular, conditioned on $ E_{(j-1)m}$, $ E_{jn}$, and $q$ super-frames being elapsed between the two adjacent successful transmissions, $y_j$ is given by 
 \begin{align}
 y_j = (M+1-m)NT+qNT+ nNT,
 \end{align}
 where $n>m$, if $q=0$. 
 
 By following the steps similar to those in \cite{crnomaaoi}, the average AoI can be obtained by using the statistics of $y_j$, i.e., 
 \begin{align}\label{exxx}
 \bar{\Delta} = T +\frac{\mathcal{E}\{Y^2\}}{\mathcal{E}\{Y\}},
 \end{align}
 where $\mathcal{E}\{Y^2\} = \underset{J\rightarrow \infty}{\lim} \sum^{J}_{j=1}y_j^2$ and $\mathcal{E}\{Y\} = \underset{J\rightarrow \infty}{\lim} \sum^{J}_{j=1}y_j$.
 
$\mathcal{E}\{Y\}$ can be expressed as follows:  
 {\small \begin{align}\nonumber
 \mathcal{E}\{Y\} =& \sum^{\infty}_{q=0} \sum^{M+1}_{m=1}
 \sum^{M+1}_{n=1}\left[ (M+1-m)NT+q(M+1)NT+ nNT \right]\\\label{xdfd} &\times {\rm P}( E_{(j-1)m},E_{jn}, x_j = q, E_{m+1}^{M+1},E_{1}^{n-1})\\\nonumber &+
  \sum^{M}_{m=1}
 \sum^{M+1}_{n=m+1} (n-m)NT  {\rm P}( E_{(j-1)m}, E_{jn}, E_{m+1}^{n-1}),
 \end{align}}
 $\hspace{-0.4em}$where $E_i^k$ denotes the event that the user fails to update the base station from the $i$-th frame to the $k$-th frame within one super-frame, and $x_j$ denotes the number of super-frames between the $(j-1)$-th and $j$-th   successful transmissions.  The last term in \eqref{xdfd} is for the event that the two adjacent successful updates   happen in one super-frame. 
 
Because   the users' channel gains are i.i.d., ${\rm P}( E_{(j-1)m})={\rm P}( E_{jm})$. Recall that $  E_{jm}$ is the event that ${\rm U}_1^1$ succeeds in the $m$-th frame.  According to the principle of CR-NOMA,  in the first frame, i.e., $m=1$, ${\rm U}_1^1$ acts as the primary user, and ${\rm P}( E_{j1})$ is simply given by ${\rm P}( E_{j1}) = e^{-\frac{\epsilon}{P}}$. In the $m$-th frame, $2\leq m \leq M+1$, ${\rm U}_1^1$ 
 acts as a secondary user and is  assigned with the SNR level, $P_{M-m+2}$, where ${\rm P}( E_{jm})$ can be calculated as follows:
 \begin{align}
 {\rm P}( E_{jm}) =& {\rm P}\left( E_{Pm},  \frac{P_{M-m+2}}{|h_1^m|^2}\leq P\right),
 \end{align}
  where $E_{Pm}$ denotes the event that $P_{M-m+2}$ can be supported by ${\rm U}_1^m$, and $h_1^m$ denotes  ${\rm U}_1^1$'s channel gain in the $m$-th frame. $ {\rm P}( E_{jm}) $ can be further evaluated as follows:
 \begin{align}
 {\rm P}( E_{jm}) =& {\rm P}\left( K_m\geq M-m+2,  \frac{P_{M-m+2}}{|h_1^m|^2}\leq P\right) 
  \\\nonumber
 =& \sum^{\infty}_{k=M-m+2}{\rm P}\left( K_m=k\right) {\rm P}\left(  \frac{P_{M-m+2}}{|h_1^m|^2}\leq P\right) ,
 \end{align}
where   $K_m$ denotes the number of SNR levels which can be supported by the primary user, ${\rm U}_1^m$. 
 
 By applying Lemma \ref{lemma1}, $ {\rm P}( E_{jm}) $ can be expressed as follows:  
  \begin{align}\label{eqx2}
 {\rm P}( E_{jm})  
 =& \sum^{\infty}_{k=M-m+2}\left(  
 e^{- \frac{\epsilon+\epsilon \eta_k}{P} }-e^{- \frac{\epsilon+\epsilon \eta_{k+1} }{P}}\right)e^{-\frac{P_{M-m+2}}{P}},
 \end{align}
 for $2\leq m \leq M+1$.  
 
The expression of $ {\rm P}( E_{jm})  $ can be used to find the probability of $x_j$ as follows:
  \begin{align}\nonumber
 {\rm P}( x_j=q) =&\left[\left(1-{\rm P}( E_{j1})  \right)\prod_{p=2}^{M+1}\left(1- {\rm P}( E_{jp})\right) \right]^q  \\ 
 =&\left(1-e^{-\frac{\epsilon}{P}} \right)^q\prod_{p=2}^{M+1}\left(1- {\rm P}( E_{jp})\right) ^q.\label{eqx3}
 \end{align}
 Similarly, ${\rm P}(  E_{m}^{n})$ can be obtained as follows:
   \begin{align}\label{eqx4}
{\rm P}(  E_{m}^{n}) =& \prod_{p=m}^{n}\left(1- {\rm P}( E_{jp})\right) .
 \end{align}
 By substituting \eqref{eqx2},  \eqref{eqx3} and  \eqref{eqx4} into \eqref{xdfd}, a  closed-form expression of $\mathcal{E}\{Y\}$  can be obtained. $\mathcal{E}\{Y^2\}$ can be evaluated similarly to $\mathcal{E}\{Y\}$, and   hence the lemma is proved.

\bibliographystyle{IEEEtran}
\bibliography{IEEEfull,trasfer}
  \end{document}